\documentclass[letterpaper, 10 pt, conference]{ieeeconf} 
\IEEEoverridecommandlockouts
\usepackage{cite}
\usepackage[cmex10]{amsmath}
\usepackage{array}
\usepackage{moreverb}
\usepackage{algorithm,algorithmic}
\usepackage{arrayjobx}
\makeatletter
\let\NAT@parse\undefined
\makeatother
\usepackage[colorlinks,citecolor=darkblue,urlcolor=red, linkcolor=black, hyperfigures]{hyperref}
\usepackage{amsmath,amssymb}
\usepackage{times}
\usepackage{graphicx}
\usepackage{subfigure}
\usepackage{setspace}
\usepackage{soul, xcolor}
\usepackage{float}
\usepackage{indentfirst}
\usepackage{bm}
\usepackage{booktabs}
\usepackage{flushend}
\usepackage{balance}
\usepackage[export]{adjustbox}
\usepackage{caption}
\usepackage{enumerate}
 
\usepackage{algorithmic}
 
\makeatletter
\newcommand{\removelatexerror}{\let\@latex@error\@gobble}
\makeatother
\newtheorem{example}{Example}
\newtheorem{theorem}{Theorem}
\newtheorem{lemma}{Lemma}

\newtheorem{corollary}{Corollary}
\newtheorem{definition}{Definition}
\newtheorem{proposition}{Proposition}
\newtheorem{problem}{Problem}
\newtheorem{remark}{Remark}
\newtheorem{claim}{Claim}
\newtheorem{assumption}{Assumption}
\newcommand{\bdefinition}{\begin{definition} \begin{rm} }
\newcommand{\edefinition}{ \end{rm} \hfill \rule{1.5mm}{1.5mm}
\end{definition} }
\newcommand{\bremark}{\begin{remark} \begin{rm} }
\newcommand{\eremark}{ \end{rm} \hfill \rule{1.5mm}{1.5mm}
\end{remark} }
\newcommand{\btheorem}{\begin{theorem}  \begin{rm} }
\newcommand{\etheorem}{ \end{rm} \hfill \rule{1.5mm}{1.5mm}
\end{theorem} }
\newcommand{\blemma}{\begin{lemma} \begin{rm} }
\newcommand{\elemma}{ \end{rm} \hfill \rule{1.5mm}{1.5mm}
\end{lemma} }
\newcommand{\bcorollary}{\begin{corollary} \begin{rm} }
\newcommand{\ecorollary}{ \end{rm} \hfill \rule{1.5mm}{1.5mm}
\end{corollary} }
\newcommand{\bproposition}{\begin{proposition} \begin{rm} }
\newcommand{\eproposition}{ \end{rm} \hfill \rule{1.5mm}{1.5mm}
\end{proposition} }
\newcommand{\bclaim}{\begin{claim} \begin{rm} }
\newcommand{\eclaim}{ \end{rm} \hfill \rule{1.5mm}{1.5mm}
\end{claim} }
\newcommand{\bproblem}{\begin{problem} \begin{rm} }
\newcommand{\eproblem}{\end{rm} \end{problem} }
\newcommand{\bassumption}{\begin{assumption} \begin{rm} }
\newcommand{\eassumption}{ \end{rm} \hfill \rule{1.5mm}{1.5mm}
\end{assumption} }

\definecolor{darkblue}{rgb}{0.0, 0.0, 0.55}

\usepackage[textwidth=1.9cm,color=green!10,textsize=footnotesize]{todonotes}

\def\BibTeX{{\rm B\kern-.05em{\sc i\kern-.025em b}\kern-.08em
    T\kern-.1667em\lower.7ex\hbox{E}\kern-.125emX}}

\title{\Huge \bf
Existential Opacity for Discrete-Event Systems with State Observations
}

\author{Zhiyuan Huang, Zhao Tong, Jiakai Li, and Bingzhuo Zhong\textsuperscript{*}
\thanks{This work was supported by the Guangdong Provincial Project (No. 2024QN11X053) and by the Youth S$\&$T Talent Support Programme of GDSTA (No. SKXRC2025468). ({\sl Corresponding author: Bingzhuo Zhong.})}
\thanks{
Zhiyuan Huang and Bingzhuo Zhong are with the Thrust of Artificial Intelligence, The Hong Kong University of Science and Technology (Guangzhou), Guangzhou 511400, China. (e-mail: zhuang655@connect.hkust-gz.edu.cn, bingzhuoz@hkust-gz.edu.cn).
Zhao Tong and Jiakai Li are with the College of Education Sciences, The Hong Kong University of Science and Technology (Guangzhou), Guangzhou 511400, China. (e-mail: \{ztong837, jli196\}@connect.hkust-gz.edu.cn.)}
}

\begin{document}

\maketitle

\begin{abstract}
Opacity is a fundamental system property for confidentiality in discrete-event systems (DES). 
Classical opacity is typically defined under event-based observations, requiring that any secret system behavior remains indistinguishable from some non-secret behavior to an external intruder.
However, in many applications such as path planning or opacity-preserving tasks, the intruder observes system states rather than events. Moreover, it often suffices that the system exhibits secret behaviors that can be exploited for opacity-preserving task execution, but such a system property cannot be fully captured by existing notions of state-observation-based opacity.
Motivated by this limitation, we propose a relaxed notion of existing state-observation-based opacity, called \emph{existential opacity (EO)},
which only requires the existence of secret behaviors (instead of all secret behaviors) that are indistinguishable from a non-secret behavior under the state observations of the intruder.
We show that the notion of EO is more expressive than existing state-observation-based opacity notions. In addition, a class of EO properties together with their corresponding verification approaches are developed, enabling the analysis of existential opacity in discrete-event systems and providing a new criterion for determining the feasibility of opacity-preserving problems.
\end{abstract}

\begin{keywords}
Discrete-event system, Formal Verification, Cyber-Physical Security 
\end{keywords}

 \section{Introduction}
Discrete-event systems (DES) are widely used in control theory for systems whose dynamics evolve through event-driven transitions \cite{lafortune2019discrete}. Their well-established mathematical structure enables systematic modeling and analysis of complex systems and supports high-level logical reasoning and verification. As a result, DES have been widely applied in many areas, such as autonomous robotics \cite{zhao2025no}, traffic networks \cite{liang2021application}, and software systems \cite{saadawi2013principles}.
With the increasing complexity of modern applications, besides logical specification satisfaction and physical safety, information-flow security at the system level has also attracted growing research attention \cite{zhong2023towards}. 
In particular, enforcing information confidentiality over DES has become an important and challenging problem.

Opacity has become an important formal notion for confidentiality in the information-flow security of DES \cite{cassandras2007introduction}. In general, opacity concerns the ability of a system to conceal secret behaviors from identification by an external passive intruder that can only see a set of observable events \cite{lin2011opacity}.
Accordingly, a system satisfies opacity if, for any secret behavior, there exists another non-secret behavior that is observationally equivalent from the perspective of the intruder \cite{lin2011opacity, mayer2024current}. 
Here, a behavior of a DES is represented by a sequence of events or states generated by the system. Moreover, a behavior is considered secret if it satisfies a given secret property. Depending on the type of secret behavior that the intruder attempts to infer, different notions of opacity have been proposed. For secrets associated with reaching secret states, classical notions include current-state opacity \cite{saboori2007notions}, initial-state opacity \cite{han2023strong,saboori2008verification}, K-step opacity \cite{saboori2011verificationKstep, yin2017new, balun2022verification}, and infinite-step opacity \cite{saboori2011verification}.
According to different opacity properties, the intruder cannot determine whether the secret behavior has occurred, is occurring, or will occur.
In recent years, opacity notions with language-based secret requirements have also been investigated \cite{wintenberg2022general}, where secrets are defined over system executions rather than visiting secret states. 
Moreover, the concept of opacity has been expanded to cyber-physical systems with the observation of the state output \cite{mayer2024current}.
Nevertheless, the definition of opacity can be restrictive, as it requires all secret behaviors of the system to be indistinguishable to the intruder from some non-secret behavior.

In recent years, the opacity-preserving problem has received increasing attention in cyber-physical systems.
Such a problem mainly focuses on synthesizing a controller such that the secret behavior of the system is indistinguishable from at least one non-secret behavior under the observations of the intruder.
Based on the classical definition of opacity, some existing works \cite{hadjicostis2018trajectory, saboori2011coverage, shi2023security} have investigated the opacity-preserving problem where the intruder infers the system behavior through event observations.
In contrast, in many practical scenarios, the external passive intruder may (partially) observe state-related information of the system instead of events.
Under such state observation, different notions of secrets have been considered in the literature. 
For instance, opacity-preserving problems with secret states have been studied in \cite{yang2020secure, yu2022security}, while execution-level security, where the secret is specified as a property of the entire execution, has been investigated in \cite{wang2020hyperproperties, zheng2023optimal}.
Such problems commonly arise in autonomous robotic systems, especially in path-planning tasks where a robot needs to reach a goal while concealing sensitive information from an external intruder.
However, existing approaches focus solely on generating executions or synthesizing controllers that satisfy specific security requirements without investigating the underlying system property that guarantees the existence of such executions or controllers.

Motivated by the limitation of existing state-observation-based opacity notions, this paper considers a new notion of opacity with respect to state observations by a passive intruder.
The main contributions of this paper are summarized as follows.
\begin{enumerate}
    \item We introduce the notion of \emph{existential opacity (EO)} that captures the existence of secret behaviors that remain indistinguishable from non-secret behaviors to an external intruder that observes the system states. 
    Unlike state-observation-based opacity, which provides only a sufficient condition for the existence of indistinguishable behaviors, EO generalizes this notion by offering a necessary and sufficient condition under which the opacity-preserving problem is feasible.

\item We focus on a subset of EO characterizing a system in which there exist secret behaviors whose state observation sequence is indistinguishable from that of some non-secret behavior to the intruder until the current step.
This type of EO captures the existence of behaviors that prevent the intruder from determining whether the system is currently in secret states.
Focusing on this notion, we develop a verification method for transition systems, which provides a criterion for determining the feasibility of the opacity-preserving problem.

\end{enumerate}
Unlike conventional weak opacity, which is typically formulated over
languages of event sequences and event observations \cite{lin2011opacity}, EO is defined over state sequences and the corresponding state observations.
Compared with the notion of state-observation-based opacity in \cite{mayer2024current}, which requires indistinguishability for all secret behaviors, EO relaxes this requirement by characterizing the existence of at least one indistinguishable secret behavior.
Moreover, EO provides a sufficient and necessary condition for the existence of indistinguishable secret behaviors, allowing one to determine the feasibility of opacity-preserving problem, which cannot be handled by existing results in~\cite{yang2020secure,wang2020hyperproperties, yu2022security, zheng2023optimal}.

The remainder of the paper is organized as follows.
Section \ref{preliminary} presents the preliminaries, including the system model, the intruder model, and the notions of state-observation-based opacity and opacity-preserving problem defined with respect to these models.
Section \ref{sec: motivation} illustrates the limitations of the notion of opacity through a motivating example and introduces the problem considered in this paper.
Section \ref{sec: EO} introduces existential opacity and investigates its relation to the notion of opacity defined in Section \ref{preliminary}.
Section \ref{sec: verfication EO} introduces a specific type of EO and proposes methods for verifying this property.
Section \ref{sec: case study} presents case studies that illustrate the effectiveness of the proposed verification methods.
Finally, Section \ref{Conclusion} concludes the paper.
 \section{Preliminaries} \label{preliminary}
This section reviews the abstractions of the system and intruder models, as well as the notion of opacity defined based on these models. We use $\mathbb{R}$, $\mathbb{R}_{+}$ and $\mathbb{N}$ to denote the set of real numbers, positive real numbers and natural numbers, respectively.
In analogy with the Kleene star and $\omega$-words, the superscripts $*$ and $\omega$
denote finite and infinite paths, respectively, while the superscript $*\omega$ denotes their union.
Let $Proj_{i}$ denote the projection onto the $i$-th component of a tuple.  
For example, for the tuple $(a,b,c,d)$, we have $a = Proj_{1}(a,b,c,d)$.

\vspace{-3pt}
\subsection{Motion ability abstraction}
\begin{definition} \label{def: TS}
Consider a system work in a workspace $\Pi \subseteq \mathbb{R}^{n}$, which is partitioned as $N \in \mathbb{N}$ regions $\Pi =\left \{ \pi_{1}, \dots, \pi_{N}\right \} $.
The motion ability of the system is abstracted as a finite transition system (TS) defined as
\begin{equation} \label{eq: wTS}
    T_{} := (\Pi_{}, \Pi_{0}, U, \to_{}, w, AP, L ),
\end{equation}
where $\Pi_{}$ is a finite set of states representing all regions, 
$\Pi_{0} \subseteq \Pi_{}$ is the set of initial states, $U$ is the set of control inputs,
$\to_{} \subseteq \Pi_{} \times U \times \Pi_{}$ is the transition relation, $w: \Pi \times U \times  \Pi \to \mathbb{R}_{+}$ is a cost function indicating the moving cost between two regions,
$AP$ is a set of atomic propositions having properties of system interests,
and $L: \Pi \to 2^{AP}$ is the labeling function that denotes the properties at a specific region.
The transition system $T$ is assumed to be deterministic, i.e., for any $\pi \in \Pi$ and $u \in U$, there exists at most one $\pi^{+} \in \Pi$ such that $(\pi,u,\pi^{+}) \in \to$.
\end{definition}

An infinite path $\tau =\pi(0)\pi(1) \cdots \in Path^{\omega}(T) $ of $T$ is an infinite sequence of states such that $\pi(0) \in \Pi_{0}$ and $\exists u(k) \in U$ such that $(\pi(k),u(k),\pi(k+1)) \in \to$, for $k\in \mathbb{N}$.
Moreover, the $n$-step finite path for the transition system $T$ is denoted by $\tau(0,n) :=\pi(0)\pi(1)  \cdots \pi(n) \in Path^{*}(T)$ with $n \in \mathbb{N}$.

\subsection{Intruder model and Opacity}
The security property considered in this paper is to prevent the intruder from inferring the identity of the agent that executed certain behaviors of critical importance. To this end, we aim to ensure that such behaviors remain indistinguishable from other non-critical behaviors under the observation of passive intruders. This naturally motivates the adoption of the opacity concept, which provides a formal methodology to model and analyze whether the occurrence of secret behaviors can be concealed from an external passive intruder.

Considering the transition system $T$ as in Definition \ref{def: TS}, the intruder model for the transition system $T$ is defined as follows.
\begin{definition} \label{def: intruder}
Considering a transition system $T = (\Pi, \Pi_{0},U, \to_{}, w, AP, L )$ defined in Definition \ref{def: TS}, 
the intruder model for the transition system $T$ is described by the following observation function 
\begin{equation} \label{intruder_sub}
    H : \Pi \to Y,
\end{equation}
where $Y$ represents the observation output of the intruder based on the state of the transition system $T$.
Furthermore, we assume that the intruder has full knowledge of system model $T$, but can obtain information about the system only through the observation outputs $Y$.
For a path $\tau\in Path^{*\omega}(T)$ generated by the system $T$ starting from an initial state $\pi(0) \in \Pi_{0}$, the corresponding observation sequence by the intruder is defined as $H(\tau) := H(\pi(0)) H(\pi(1)) \cdots\in Path^{*\omega}(Y)$.
\end{definition}
From Definition \ref{def: intruder}, the intruder model we consider in this paper can observe state-related information.
This reflects real-world scenarios, such as remotely operated systems, where state information may be continuously monitored by an intruder through data transmission.
To describe system security in such information flows, we use the notion of opacity.
To this end, we first define the sets of admissible, secret, and non-secret paths as follows, before formally introducing opacity.


\begin{definition} \label{def: path set}
We define the \emph{admissible path set} of system $T$ as $\mathcal{D}(T) \subseteq Path^{*\omega}(T)$, i.e., the set of paths satisfying certain properties such as reaching a goal region or satisfying temporal logic tasks.
The subset $\varphi_s \subseteq \mathcal{D}(T)$ is defined as the \emph{secret path set}, representing the set of secret paths exhibiting secret behaviors, e.g., reaching a secret region or
satisfying a secret temporal logic task.
Finally, the \emph{non-secret path set} $\bar{\varphi}_s$ is defined as the complement of the secret path set, i.e., $\bar{\varphi}_s := \mathcal{D}(T) \setminus \varphi_s$, consisting of all admissible paths that do not involve secret behaviors.
\end{definition}
Intuitively, admissible path set $\mathcal{D}(T)$ specifies the set of system paths under consideration, serving as the domain over which the secret and non-secret path sets are defined. 
Moreover, the secret path set $\varphi_s$ characterizes the class of secret behaviors within the admissible path set under consideration.
Correspondingly, the non-secret path set $\bar{\varphi}_s$ provides the basis for protecting these secret behaviors under intruder observation.

Next, adapted from \cite{mayer2024current}, the state-observation-based opacity property for the transition system $T$ is defined as follows. 

\begin{definition}\label{def: opacity}
    Consider a transition system $T =(\Pi_{}, \Pi_{0}, U, \to_{},w, AP, L ) $ as in Definition \ref{def: TS}, and a passive intruder $H$ for $T$ is modeled in Definition \ref{def: intruder}.
    Moreover, let the admissible path set $\mathcal{D}(T)$, secret path set $\varphi_s$ and non-secret path set $\bar{\varphi}_s$ as defined in Definition \ref{def: path set}.
    The transition system $T$ is $\varphi_{s}-$opaque if $\forall \tau \in \varphi_{s}$, $\exists \tau^{\prime} \in \bar{\varphi}_{s} $, such that $H(\tau) = H(\tau^{\prime})$.
\end{definition}

Based on the notion of opacity in Definition~\ref{def: opacity}, we next define the \emph{opacity-preserving problem}, which aims to design control inputs to ensure that the controlled system remains opaque.

\begin{definition} \label{def: opacity-preserving problem}
Consider a transition system $T=(\Pi, \Pi_{0}, U, \to, w, AP, L )$ as defined in Definition~\ref{def: TS}, and let $H$ be a passive intruder for $T$ as modeled in Definition~\ref{def: intruder}. 
Moreover, we consider the admissible, secret, and non-secret path sets 
$\mathcal{D}(T)$, $\varphi_s$, and $\bar{\varphi}_s$, respectively, 
as defined in Definition~\ref{def: path set}.
Define a control strategy as 
$\mathcal{C} := u(0)u(1)\cdots u(k)\cdots \in Path^{*\omega}(U)$, 
where $Path^{*\omega}(U)$ denotes the set of all control input sequences. 
The corresponding path generated by $\mathcal{C}$ is 
$\tau_{\mathcal C} := \pi(0)\pi(1)\cdots \pi(k)\cdots \in Path^{*\omega}(T)$, 
where $(\pi(k), u(k), \pi(k+1)) \in \to$ for any $k \in \mathbb{N}$.
The \emph{opacity-preserving problem} consists of designing a control strategy 
$\mathcal{C} \in Path^{*\omega}(U)$ such that the path generated by $\mathcal{C}$ satisfies: 
if for any $\tau_{\mathcal C} \in \varphi_s$, then there exists $\tau' \in \bar{\varphi}_s$ such that $H(\tau_{\mathcal C}) = H(\tau')$.
Additionally, the opacity-preserving problem is \emph{feasible} if a control strategy for the opacity-preserving problem exists.

\end{definition}

\begin{remark} \label{remark: opacity}
Note that the admissible path set $\mathcal{D}(T)$ and the secret path set $\varphi_{s}$ jointly determine which type of opacity is evaluated. 
For instance, if $\mathcal{D}(T)$ is the set of all finite paths of $T$ and the secret path set $\varphi_{s}$ specifies being currently in secret states, the resulting notion could reduce to existing notions of opacity such as current-state opacity based on state outputs (CSO-SO) \cite{mayer2024current}.
\end{remark}
\begin{remark}
The transition system $T$ being $\varphi_s$-opaque provides a sufficient condition for the feasibility of the opacity-preserving problem. This follows directly from Definition~\ref{def: opacity}, since for any secret path, there already exists a non-secret path that produces the same observation. 
However, the converse does not necessarily hold. That is, even if the opacity-preserving problem is feasible, the system itself may not be $\varphi_s$-opaque. 
This phenomenon will be illustrated by an example in the next section.
\end{remark}

\section{Motivation and Problem Formulation} \label{sec: motivation}
For a transition system $T$ in Definition \ref{def: TS}, some secret paths $\tau \in \varphi_{s}$ may be inevitably disclosed to the intruder for which no non-secret path $\tau' \in \bar{\varphi}_{s}$ produce the same observation $H(\tau) = H(\tau')$.
This phenomenon is intrinsic to path planning, where physical constraints,
environmental topology, and task requirements may result in certain secret behaviors being uniquely distinguishable by the intruder.
To avoid the disclosure of secret behaviors, the opacity-preserving problem described in Definition \ref{def: opacity-preserving problem} aims to design control inputs such that the system $T$ generates some secret paths that remain indistinguishable from non-secret path to the intruder. 
Accordingly, prior to controller design, it is necessary to verify the system to ensure the feasibility of opacity-preserving problem stated in Definition \ref{def: opacity-preserving problem}. 
However, the notion of state-observation-based opacity in Definition \ref{def: opacity}, which requires secret preservation for all paths in secret path set, is too strict to capture this system property and cannot determine whether the opacity-preserving problem is feasible as in Definition \ref{def: opacity-preserving problem}.
Even in systems classified as non-opaque under this notion, the opacity-preserving problem may still be feasible.

This limitation of the opacity stated in Definition \ref{def: opacity} can be illustrated by means of a simple example. 
\begin{example} \label{eg: motivation}
Consider a workspace $\Pi$ abstracted as Fig. \ref{fig: workspace example},
where the motion of an agent is modeled by a transition system $T$ in Definition \ref{def: TS} whose transitions correspond to the connectivity between states in the workspace abstraction.
Since the system is deterministic, it suffices to consider paths only, as each path corresponds to a unique control strategy. 
The intruder model \eqref{intruder_sub} in this example satisfies $H_{}(q_{1}) =H_{}(q_{3}) =O_{1}$ and $H_{}(q_{2}) =H_{}(q_{4}) =O_{2}$.
Assume that the $q_{2}$ is a secret state and the secret behavior is to visit the state $q_{2}$.
Suppose that the set of initial states for the agent is $\Pi_{0} =\{q_{1}\}$, and that the intruder has complete knowledge of the agent system $T$.
Consider a finite observation sequence by the intruder as $H_{}(\tau(0,2))=O_{1}O_{2}O_{2}$.  
From this observation, the intruder will know for sure that the state $q_{2}$ has been visited by the agent since only the finite path $\tau = q_{1}q_{2}q_{4}$ can render such an observation path. 
In other words, for the finite $\tau = q_{1}q_{2}q_{4}$, there exist no other non-secret finite path $\tau^{\prime} $ satisfying $H(\tau ) = H(\tau^{\prime})$.
This indicates that the WTS of the agent is not opaque according to Definition \ref{def: opacity}.
However, if the finite path of the agent is $\tau(0,2)=q_{1}q_{1}q_{2}$, the sub-intruder cannot infer whether the agent has visited the secret state $q_{2}$, since there exists a non-secret path $\tau^{\prime}(0,2)=q_{1}q_{3}q_{4}$ that produces the same observation $O_{1}O_{1}O_{2}$.
\end{example}
\begin{figure}[htbp]
	\centering
	\includegraphics[width=0.3\textwidth]{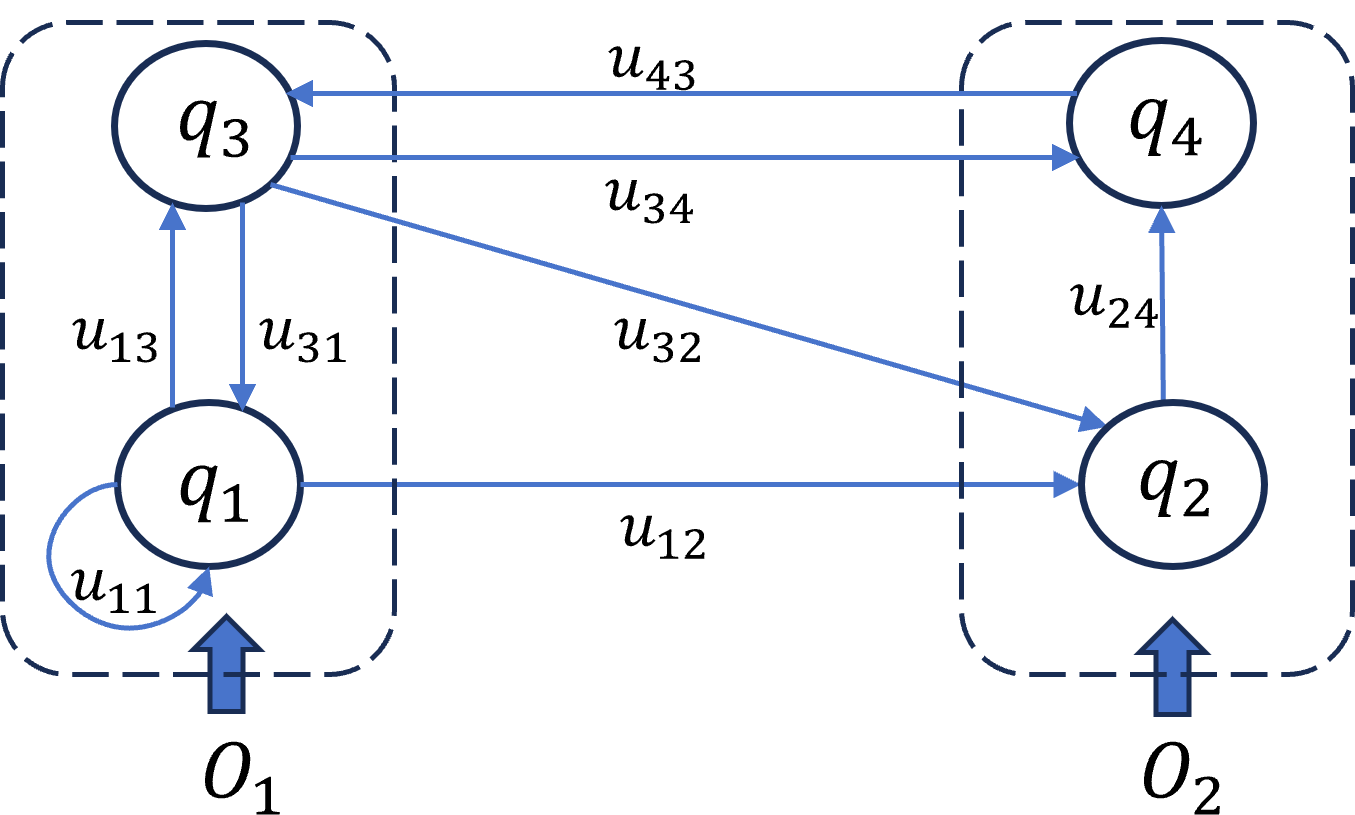}
    \vspace{-4pt}
	\caption{The abstracted workspace.}
    \vspace{-5pt}
	\label{fig: workspace example}
\end{figure}
The example shows that even when the state-observation-based opacity fails, the system still has the secret paths which are indistinguishable from some non-secret path to the intruder.
Thus, state-observation-based opacity in Definition \ref{def: opacity} cannot fully determine whether the opacity-preserving problem is feasible.
Motivated by the limitation of the state-observation-based opacity, we propose, in the next section, a more relaxed notion of opacity, called \emph{existential opacity}, for discrete event systems. 
The EO depicts the system property of existence of secret paths that remain indistinguishable from non-secret paths to the intruder, thereby allowing us to access whether the opacity-preserving problem is feasible.

\section{Existential opacity} \label{sec: EO}

\subsection{Existential Opacity}
To overcome the limitation of opacity for capturing secret behaviors of transition systems $T$ pointed out in Example \ref{eg: motivation}, we introduce the notion of existential opacity.
\begin{definition} \label{def: EO}
    Consider a transition system $T$ as in Definition \ref{def: TS} and a passive intruder $H$ for $T$ as in Definition \ref{def: intruder}. 
    Let the admissible path set, secret path set and non-secret path set be denoted by $\mathcal{D}(T)$, $\varphi_s$, and $\bar{\varphi}_s$, respectively, according to Definition~\ref{def: path set}.
    The transition system $T$ is $\varphi_{s}-$existentially opaque (EO) if  $\exists \tau \in \varphi_{s}$, $\exists \tau^{\prime} \in \bar{\varphi}_{s} $, such that $H(\tau) = H(\tau^{\prime})$.
\end{definition}

Next, we present a theorem showing that the existential opacity defined in Definition~\ref{def: EO} is more expressive than the original notion of opacity in Definition~\ref{def: opacity}.
\begin{theorem}
 Consider a transition system $T$ in Definition \ref{def: TS} and a passive intruder $H$ for $T$ defined in Definition \ref{def: intruder}.
 Furthermore, let $\mathcal{D}(T)$, $\varphi_s$, and $\bar{\varphi}_s$ denote the admissible path set, secret path set and non-secret path set, respectively, following Definition~\ref{def: path set}. 
 If the system $T$ is $\varphi_{s}-$opaque, then it is also $\varphi_{s}-$existentially opaque.
\end{theorem}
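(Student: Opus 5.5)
The plan is to unfold Definition~\ref{def: opacity} and Definition~\ref{def: EO} and observe that a universal statement over a nonempty domain implies the corresponding existential statement. Assume $T$ is $\varphi_s$-opaque, so that for every $\tau \in \varphi_s$ there exists $\tau' \in \bar{\varphi}_s$ with $H(\tau) = H(\tau')$. To obtain existential opacity we must exhibit one pair $(\tau,\tau') \in \varphi_s \times \bar{\varphi}_s$ with equal observation sequences.

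The steps, in order, are as follows. First, pick an arbitrary secret path $\tau \in \varphi_s$. Second, apply the opacity hypothesis to this $\tau$ to obtain a witness $\tau' \in \bar{\varphi}_s$ satisfying $H(\tau) = H(\tau')$. Third, conclude that the pair $(\tau,\tau')$ satisfies the condition of Definition~\ref{def: EO}, so $T$ is $\varphi_s$-existentially opaque.

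The only real obstacle is the first step, which requires $\varphi_s \neq \emptyset$. If $\varphi_s = \emptyset$, then $T$ is vacuously $\varphi_s$-opaque but cannot be $\varphi_s$-existentially opaque, so the implication fails as literally stated. I would therefore make explicit the standing assumption that the secret path set is nonempty, which is natural since the secret behavior of interest is assumed to be realizable by $T$. Under that assumption the argument is complete. In the degenerate case $\varphi_s = \emptyset$, I would remark that neither opacity notion carries any information, because there is nothing to conceal.

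Finally, I would add a short remark that the converse fails, using Example~\ref{eg: motivation}. The secret path $q_1q_1q_2$ is matched by the non-secret path $q_1q_3q_4$, so the system is $\varphi_s$-existentially opaque. However, $q_1q_2q_4$ has no non-secret match, so the system is not $\varphi_s$-opaque. This shows that EO is strictly weaker than opacity and hence the more expressive notion.
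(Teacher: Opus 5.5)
Your argument is correct and is essentially the paper's own: the paper justifies the theorem in one line, noting that the universal requirement of Definition~\ref{def: opacity} implies the existential one of Definition~\ref{def: EO}, and it likewise cites Example~\ref{eg: motivation} to show that the converse fails. Your nonemptiness caveat is well taken, and the paper's justification silently skips it. If $\varphi_s=\emptyset$ (for instance, when no secret state is reachable), then $T$ is vacuously $\varphi_s$-opaque but not $\varphi_s$-existentially opaque, so the implication genuinely requires the standing assumption $\varphi_s\neq\emptyset$.
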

The above implication follows directly from Definitions \ref{def: opacity} and \ref{def: EO}.
Specifically, $\varphi_s$-opacity requires that for every secret path there exists
an indistinguishable non-secret path, whereas $\varphi_s$-existential opacity only
requires the existence of such a path. Hence, the former implies the latter.
Note that the converse does not generally hold, as existential opacity only requires the existence of a non-secret path that is observationally indistinguishable from at least one secret path, rather than from all secret paths.
Example \ref{eg: motivation} illustrates a case in which $T$ is $\varphi_{s}-$existentially opaque but not $\varphi_{s}-$opaque.

Next, we investigate the relation between EO in Definition \ref{def: EO} and the feasibility of the opacity-preserving problem in Definition \ref{def: opacity-preserving problem} by the following corollary.
\begin{corollary}\label{corollary: EO feasibility}
Let $T$ be a transition system given in Definition \ref{def: TS}, and $H$ be a a passive intruder for $T$ introduced in Definition \ref{def: intruder}.
The admissible path set, secret path set, and non-secret path set are denoted by $\mathcal{D}(T)$, $\varphi_s$, and $\bar{\varphi}_s$ respectively, following Definition~\ref{def: path set}.
The system $T$ is EO if and only if the opacity preserving problem is feasible in the sense of Definition~\ref{def: opacity-preserving problem}.
\end{corollary}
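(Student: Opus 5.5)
The plan is to prove both implications directly from Definitions~\ref{def: EO} and~\ref{def: opacity-preserving problem}. The bridge between them is the correspondence between paths of $T$ and control strategies, which rests on the determinism assumption in Definition~\ref{def: TS}.

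First I will fix how Definition~\ref{def: opacity-preserving problem} is read. The clause ``if $\tau_{\mathcal C}\in\varphi_s$ then there exists $\tau'\in\bar\varphi_s$ \ldots'' is vacuous when $\tau_{\mathcal C}\notin\varphi_s$. If it were read that way, any strategy producing a non-secret path would make the problem trivially feasible, and the equivalence would fail. I will therefore state explicitly the reading intended in Section~\ref{sec: motivation}: the strategy must make the system \emph{exhibit} a secret behaviour that stays concealed. Formally, $\tau_{\mathcal C}\in\varphi_s$ (in particular $\tau_{\mathcal C}\in\mathcal D(T)$), and there exists $\tau'\in\bar\varphi_s$ with $H(\tau_{\mathcal C})=H(\tau')$. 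I will also make explicit that the strategy comes together with a choice of initial state $\pi(0)\in\Pi_0$, as in Example~\ref{eg: motivation}, where paths are identified with strategies.

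\emph{EO $\Rightarrow$ feasibility.} Let $\tau=\pi(0)\pi(1)\cdots\in\varphi_s$ and $\tau'\in\bar\varphi_s$ satisfy $H(\tau)=H(\tau')$, as given by Definition~\ref{def: EO}. Since $\tau\in Path^{*\omega}(T)$, for each $k$ there is some $u(k)\in U$ with $(\pi(k),u(k),\pi(k+1))\in\to$. Choosing one such input for each $k$ gives $\mathcal C=u(0)u(1)\cdots$, a finite sequence if $\tau$ is finite. Starting from $\pi(0)$ and applying $\mathcal C$, induction on $k$ together with determinism (at most one successor for each pair $(\pi,u)$) shows that the generated path is exactly $\tau_{\mathcal C}=\tau$. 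Hence $\tau_{\mathcal C}\in\varphi_s$, and $\tau'$ witnesses indistinguishability, so $\mathcal C$ solves the problem.

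\emph{Feasibility $\Rightarrow$ EO.} Suppose $\mathcal C$ solves the problem. Then $\tau_{\mathcal C}\in\varphi_s$ and some $\tau'\in\bar\varphi_s$ satisfies $H(\tau_{\mathcal C})=H(\tau')$. Taking $\tau:=\tau_{\mathcal C}$ gives the two witnesses required by Definition~\ref{def: EO}.

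I expect the main obstacle to be the definitional reading rather than any computation. Both directions are one-line once the quantifier in Definition~\ref{def: opacity-preserving problem} is fixed as above and the initial state is treated as part of the strategy, so I would state these conventions before giving the argument. The only technical point is the induction using determinism, which ensures that the constructed strategy reproduces $\tau$ rather than some other path with the same inputs. This needs no separate treatment for finite and infinite paths.
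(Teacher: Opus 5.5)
Your proof is correct and follows the same route as the paper: determinism turns the EO witness $\tau$ into a strategy $\mathcal C$ with $\tau_{\mathcal C}=\tau$, and conversely the path of a solving strategy is itself the EO witness. You read Definition~\ref{def: opacity-preserving problem} explicitly as requiring $\tau_{\mathcal C}\in\varphi_s$, with the initial state fixed alongside the strategy; the paper's ($\Leftarrow$) step uses exactly this reading without saying so, and under the literal vacuous-implication reading that direction would fail.
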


\begin{proof}
($\Rightarrow$) Suppose that the system $T$ is EO. 
By Definition~\ref{def: EO}, there exists a secret path $\tau \in \varphi_s$ such that there exists a non-secret path $\tau' \in \bar{\varphi}_s$ satisfying
$H(\tau) = H(\tau')$.
Since $T$ is deterministic from Definition \ref{def: TS}, there exists a control strategy $\mathcal{C} \in Path^{*\omega}(U)$ that generates $\tau$, i.e., $\tau = \tau_{\mathcal C}$. 
Thus, the control strategy $\mathcal C$ satisfies the opacity-preserving condition, which implies that the opacity-preserving problem is feasible according to Definition~\ref{def: opacity-preserving problem}.

($\Leftarrow$) Conversely, suppose that the opacity-preserving problem is feasible. 
Then there exists a control strategy $\mathcal C \in Path^{*\omega}(U)$ such that the generated path $\tau_{\mathcal C}$ satisfies
\[
\tau_{\mathcal C} \in \varphi_s \Rightarrow \exists \tau' \in \bar{\varphi}_s \text{ such that } H(\tau_{\mathcal C}) = H(\tau').
\]
In particular, based on Definition \ref{def: opacity-preserving problem}, if $\tau_{\mathcal C} \in \varphi_s$, then there exists a non-secret path $\tau'\in\bar{\varphi}_{s}$ that is indistinguishable from $\tau_{\mathcal C}$ under the observation of the intruder, i.e. $H(\tau) = H(\tau^{\prime})$. 
Hence, the system $T$ satisfies EO based on Definition \ref{def: EO}.
\end{proof}
Corollary \ref{corollary: EO feasibility} shows that EO provides both a necessary and sufficient condition for the feasibility of the opacity-preserving problem stated in Definition \ref{def: opacity-preserving problem}. 
In particular, this property cannot be guaranteed by the notion of state-observation-based opacity as defined in Definition~\ref{def: opacity}, 
as already illustrated in Example~\ref{eg: motivation}.

\begin{remark}
Similar to the discussion in Remark \ref{remark: opacity}, 
by appropriately specifying the admissible path set $\mathcal{D}(T)$ together with the secret path set $\varphi_s$ introduced in Definition \ref{def: path set}, EO can capture different system security properties in opacity-preserving problems.
For instance, let $\mathcal{D}(T)$ consist of the set of all paths of $T$ that satisfy a given task specification. 
If $\varphi_s$ characterizes a set of paths that visit secret states, originate from secret initial states, or satisfy certain secret specifications, 
then EO could capture the corresponding system security properties in \cite{yu2022security}, \cite{yang2020secure}, and \cite{zheng2023optimal}, respectively.
\end{remark}

\section{Verification of Existential Opacity} \label{sec: verfication EO}
In this section, we focus on the verification problem of the proposed EO in Definition \ref{def: EO}. 
As pointed out in Corollary \ref{corollary: EO feasibility}, verifying that a system is EO directly determines the feasibility of the opacity-preserving problem in Definition \ref{def: opacity-preserving problem}, with respect to the corresponding sets of admissible paths and secret paths introduced in Definition \ref{def: opacity}.
Concretely, we focus on path-based secrets, where a secret is incurred when a path visits a designated secret state.
The secret states are modeled as $\pi_{s} \in \Pi_{s} \subset \Pi$. 
Moreover, we define the set of non-secret states as $\Pi_{ns}= \Pi \setminus \Pi_{s}$.

\subsection{ Current-State Existential Opacity} \label{subsec:ISEO CSEO}
We denote by $\tau(0,n): =\pi(0)\pi(1) \cdots \pi(n) \in Path ^{*}(T) $ an $n$-step path generated by the transition system $T$ in Definition \ref{def: TS} and by $H(0,n):= H(\pi(0)) H(\pi(1)) \cdots H(\pi(n))  \in Path^{*}(Y)$ its observation sequence perceived by the intruder.
Next, motivated by the concept of current-state-based opacity \cite{mayer2024current}, we define the following type of existential opacity.
\begin{definition} \label{def: CSEO}
   Consider a transition system $T$ as in Definition~\ref{def: TS}, a passive intruder $H$ defined in Definition~\ref{def: intruder}, and a set of secret state $\Pi_{s} \subset\Pi$. The system $T_g$ is said to be \emph{current-state existentially opaque (CSEO)} if $\exists \tau \in Path^{*\omega}(T)$ such that
   for any step $k \in \mathbb{N}$ along the path $\tau$, if $\pi(k) \in \Pi_s$, then $\exists \tau' \in Path^{*\omega}(T)$ satisfying
        (1)  $\pi'(k) \notin \Pi_s$;
        (2) $H(\tau(0,k)) = H(\tau'(0,k))$.

\end{definition}
From Definition~\ref{def: CSEO}, CSEO characterizes the system property that there exists a system path such that, 
based on the observation sequence up to the current step $n$, the intruder can never be certain that the system is currently in a secret state.
In other words, the secret inference by the intruder is based on the observation sequence of path prefixes up to the current step, i.e., $H(\tau(0,n))$.
Therefore, the admissible path set associated with CSEO is 
\begin{equation} \label{eq: D_CSO}
    D_{CSO}(T) = \{\tau(0,n) \mid \tau \in Path^{* \omega}(T), n \in \mathbb{N} \}.
\end{equation}
The corresponding secret path set is
\begin{equation} \label{eq: phi_s_i}
    \varphi_{s} = \{  \tau(0,n) \mid \pi(n) \in \Pi_{s}, \tau \in Path^{* \omega}(T), n \in \mathbb{N} \}.
\end{equation}
As a result, one can verify that if a transition system $T$ is CSEO, then $T$ is EO with $\mathcal{D}(T)=\mathcal{D}_{CSO}(T)$ and $\varphi_s$ defined as in \eqref{eq: D_CSO} and \eqref{eq: phi_s_i}.

\subsection{Verification Methods}
In this subsection, we develop a verification method for the CSEO defined in Definition~\ref{def: CSEO}.
First, we introduce an observer transition system (OTS) for the transition system $T$ to track the information observed by the intruder $H$.
\begin{definition} \label{def: OTS}
    Given a transition system $T$ as in Definition \ref{def: TS}, its corresponding OTS $T^{o}$ is defined by:
\begin{equation*}
    T^{o} = (\Pi^{o}, \Pi_{0}^{o}, \to^{o}, U),
\end{equation*}
where 
\begin{itemize}
    \item {$\Pi^{o} \subseteq \Pi \times 2^{\Pi}\times M$ is the set of observer states, where each observer state $\pi_{}^{o} \in \Pi^{o}$ is a triple $\pi^{o}:=(\pi, C(\pi), m)$.
    Here, 
    \begin{align}
        C: \Pi \to 2^{\Pi} \label{c-set}
    \end{align}
    is a consistent-state mapping induced by the transition relation of $T$ and the observation mapping $H$, and $M:= \{\emptyset,\{1\},\{2\}\}$ is a set of markers determined by the first two components $(\pi, C(\pi))$ of the observer state.
    Specifically, for $(\pi,u,\pi^{+ }) \in \to$, the evolution of the consistent-state set is characterized by an update operator $\mathcal{U}$, i.e., $C(\pi^{+}) = \mathcal{U}\big(C(\pi), \pi^{+ }\big)$, where 
    \begin{align} \label{eq: update operator}
        &\mathcal{U}\big(C(\pi), \pi^{+}\big)=   \{\hat{\pi}^{+} \mid \exists \hat{\pi} \in C(\pi),\exists \hat{u} \in U~ \text{such that}~ \nonumber \\& ~~~(\hat{\pi}, \hat{u}, \hat{\pi}^{+})\in \to  \wedge H(\hat{\pi}^{+}) = H(\pi^{+}) \}.
   \end{align} 
    Furthermore, the marker $m \in M$ is determined by the pair $(\pi, C(\pi))$:
    
    (1) $m = \{1\}$, if the first component of the observer state $\pi_{}^{o} \in \Pi^{o}$ satisfies $\pi \in \Pi_{s} $ and there exists a non-secret state $\pi_{ns} \in \Pi_{ns}$ such that  $\pi_{ns} \in C(\pi) \setminus \{\pi\}$;

    (2) $m = \{2\}$, if the first component in observer state $\pi_{}^{o} \in \Pi^{o}$ satisfies $\pi \in \Pi_{s} $ and $C(\pi) \subseteq \Pi_{s}$;


    (3) $m =  \emptyset$, otherwise.
    }
    
    \item {$\Pi_{0}^{o} =\Pi_{0} \times 2^{\Pi_{0}} \times M$ is defined as the initial set for observer states, where each initial observer state $\pi^{o}_{0} \in \Pi_{0}^{o}$ is a triple $\pi^{o}_{0}:=(\pi_{0},C(\pi_{0}), m) $.
    Here, $\pi_{0} \in \Pi_{0}$ denotes the initial state of the transition system $T$ as in Definition \ref{def: TS}. Moreover, $C(\pi_{0})$ satisfies $C(\pi_{0}) = \{ \pi \in \Pi | \pi \in \Pi_{0}, H(\pi) = H(\pi_{0}) \}$, which represents the set of all initial states that yield the same observation to the intruder $H$ for the transition system $T$.
     Moreover, the marker $m \in M$ associated with the initial observer state depends on the pair $(\pi_{0}, C(\pi_{0}))$.

    
    (1) $m = \{1\}$, if the first component of the initial observer state $\pi_{0}^{o} \in \Pi^{o}$ satisfies $\pi_{0} \in \Pi_{s} $ and there exists a non-secret state $\pi_{ns} \in \Pi_{ns}$ such that  $\pi_{ns} \in C(\pi_{0}) \setminus \{\pi_{0}\}$;

    (2) $m = \{2\}$, if the first component in the initial observer state $\pi_{0}^{o} \in \Pi^{o}$ satisfies $\pi_{0} \in \Pi_{s} $ and $C(\pi_{0}) \subseteq \Pi_{s}$;


    (3) $m =  \emptyset$, otherwise.
    }
    \item {$\to^{o} \subseteq \Pi^{o} \times U \times \Pi^{o}$ is the transition relation of the OTS. 
    For any $\pi^{o}_{} = (\pi, C(\pi), m)$, $\pi^{o+} = (\pi^{+}, C(\pi^{+}), m^{+}) \in \Pi^{o}$, we have $(\pi^{o}_{}, u, \pi^{ o+}_{}) \in \to^{o}$ if the following holds:
    
    (1) $(\pi,u,\pi^{+}) \in \to$ and $u \in U$; 
    (2) $C(\pi^{+}) =\mathcal{U}\big(C(\pi), \pi^{+}\big)$;
    (3) $C(\pi^{+})  \ne \emptyset$
    
   }
\end{itemize}
\end{definition}
The OTS in Definition~\ref{def: OTS} integrates the motion capabilities of the transition system $T$ with the observation information obtained from its associated intruder.
Let $Path^{* \omega}(T_i^o)$ denote the set of all trajectories (paths) of the OTS $T^o$.
A path
$\tau^o:=\pi^{o}(0)\pi^{o}(1)\cdots\pi^{o}(k) \cdots  \in Path^{* \omega}(T^o)$
is a sequence of observer states of $T^o$ such that
$(\pi^{o}(k),u(k),\pi^{o}(k+1)) \in \to^o$
for all $k \in \mathbb{N}$, where $\pi^{o}_{}(k) := (\pi(k), C(\pi(k)), m(k))$ denotes the $k$-th step of $\tau^o\in Path^{* \omega}(T^o)$ and $u(k)$ satisfies $(\pi(k),u(k),\pi(k+1)) \in \to$.

Next, the following theorem characterizes the property of the consistent-state set $C(\pi(k))$ as defined in \eqref{c-set} along a path $\tau^o \in Path^{*\omega}(T^o)$, which will be used to verify CSEO.

\begin{theorem} \label{theorem: consistant set}
Consider a transition system $T$ in Definition \ref{def: TS} and its corresponding OTS $T^{o}$ constructed as in Definition \ref{def: OTS}.
For a path $\tau^o=\pi^{o}(0)\pi^{o}(1)\cdots\pi^{o}(k) \cdots  \in Path^{* \omega}(T^o)$ with  $\pi^{o}_{}(k) = (\pi(k), C(\pi(k)), m(k))$, the consistent-state set $C(\pi(k))$ is exactly the set of all reachable states observationally consistent with the execution
prefix $H(\tau(0,k))$ for each step $k$, i.e. 
\begin{align}
  &  C(\pi(k))
= \{\pi'(k) \mid
\exists \tau'(0,k)
\text{ such that }\nonumber \\
&~~~~~~~~~~
H(\tau'(0,k))=H(\tau(0,k))\},
\end{align}
where $\tau'(0,k): = \pi'(0)\pi'(1)\cdots\pi'(k) \in Path^{*}(T)$ and $\pi'(k)$ is the last state in the path $\tau'(0,k)$.
\end{theorem}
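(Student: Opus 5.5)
We argue by induction on $k$. Write $S(k)$ for the right-hand side, i.e. the set of last states $\pi'(k)$ of finite paths $\tau'(0,k)\in Path^{*}(T)$ with $H(\tau'(0,k))=H(\tau(0,k))$. We will show $C(\pi(k))=S(k)$ for every $k$ along a path $\tau^o\in Path^{*\omega}(T^o)$. By Definition~\ref{def: OTS}, $\tau(0,k)=\pi(0)\cdots\pi(k)$ is itself a path of $T$. Condition (1) of $\to^{o}$ gives the transitions, and $\pi(0)\in\Pi_0$ because $\pi^{o}(0)\in\Pi^{o}_0$. So $H(\tau(0,k))$ is a genuine observation sequence.

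\emph{Base case $k=0$.} A path $\tau'(0,0)$ of $T$ is a single state $\pi'(0)\in\Pi_0$. Hence $S(0)=\{\pi\in\Pi_0 \mid H(\pi)=H(\pi(0))\}$, which is exactly $C(\pi(0))$ as prescribed for initial observer states. Nonemptiness holds because $\pi(0)$ itself lies in the set.

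\emph{Inductive step.} Assume $C(\pi(k))=S(k)$. Since $(\pi^{o}(k),u(k),\pi^{o}(k+1))\in\to^{o}$, condition (2) gives $C(\pi(k+1))=\mathcal{U}(C(\pi(k)),\pi(k+1))$. We prove $S(k+1)$ equals this set by two inclusions.

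\emph{Inclusion $\supseteq$.} Take $\hat\pi^{+}\in\mathcal{U}(S(k),\pi(k+1))$. There are $\hat\pi\in S(k)$ and $\hat u\in U$ with $(\hat\pi,\hat u,\hat\pi^{+})\in\to$ and $H(\hat\pi^{+})=H(\pi(k+1))$. By the definition of $S(k)$, some path $\tau'(0,k)$ ends in $\hat\pi$ and satisfies $H(\tau'(0,k))=H(\tau(0,k))$. Appending $\hat\pi^{+}$ yields a path $\tau'(0,k+1)$ of $T$. Its observation is $H(\tau(0,k))H(\pi(k+1))=H(\tau(0,k+1))$, so $\hat\pi^{+}\in S(k+1)$.

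\emph{Inclusion $\subseteq$.} Take $\pi'(k+1)\in S(k+1)$, witnessed by $\tau'(0,k+1)$. Equality of observation sequences is componentwise, so the prefix $\tau'(0,k)$ satisfies $H(\tau'(0,k))=H(\tau(0,k))$. Hence $\pi'(k)\in S(k)=C(\pi(k))$. Moreover $(\pi'(k),u',\pi'(k+1))\in\to$ for some $u'\in U$, and $H(\pi'(k+1))=H(\pi(k+1))$. By \eqref{eq: update operator}, $\pi'(k+1)\in\mathcal{U}(C(\pi(k)),\pi(k+1))$.

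The main point requiring care is bookkeeping rather than depth. The OTS state stores only $C(\pi(k))$, not the history, so the induction must run along the given path $\tau^o$. The update operator depends on history only through the single new observation $H(\pi(k+1))$. The key fact used is that observation equality of sequences splits into equality of the length-$k$ prefix plus equality of the last symbol. This is what makes the one-step update $\mathcal{U}$ sufficient.

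Condition (3), $C\neq\emptyset$, is automatic here since $\pi(k)\in S(k)$. It therefore imposes no restriction, and the statement holds for every $k$ along $\tau^o$, including infinite paths, since each $k$ is treated separately.
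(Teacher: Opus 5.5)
Your proposal is correct and follows the same route as the paper: induction on $k$, with the base case read off from the initial consistent-state set and the inductive step driven by $C(\pi(k+1))=\mathcal{U}(C(\pi(k)),\pi(k+1))$. The paper only asserts that $\mathcal{U}$ produces exactly the consistent successors, whereas you prove both inclusions explicitly (path extension for $\supseteq$, splitting the observation sequence into its length-$k$ prefix and last symbol for $\subseteq$), which supplies the rigor the paper's inductive step leaves implicit.
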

\begin{proof}
    We prove the theorem by mathematical induction.

    Base Case: 
From the construction of the initial consistent-state set, we have
$C(\pi(0))
=\{\pi \in \Pi_0 \mid H(\pi) = H(\pi(0))\}$, which includes all initial states that are observationally equivalent to the initial observation $H(\tau(0,0))$. 
Hence the claim holds for $k=0$.

Induction Hypothesis:
Assume that for some $k \geq 0$, the consistent-state set satisfies
$C(\pi(k))=\{\pi'(k) \mid \exists \tau'(0,k)\text{ such that }H(\tau'(0,k)) = H(\tau(0,k))\}$. That is, $C(\pi(k))$ is the set of all states that are observationally
consistent with the execution prefix $H(\tau(0,k))$.

Induction Step.
Consider step $k+1$.
By the definition of the existential update operator as in \eqref{eq: update operator}, we have
\begin{align}
&    C(\pi(k+1))
=\mathcal{U}\big(C(\pi(k)),\pi(k+1)\big)= \nonumber \\ &
\{
\hat{\pi}^+
|
\exists \hat{\pi}\in C(\pi(k)),\nonumber\exists \hat{u} \in U,\text{such that}\\ &
~~~~~~(\hat{\pi}, \hat{u},\hat{\pi}^+)\in\to, 
H(\hat{\pi}^+)=H(\pi(k+1))
\}.
\end{align}

By the induction hypothesis, $C(\pi(k))$ is the observational equivalence
maximal state set corresponding to the path prefix $H(\tau(0,k))$.
Therefore, the update operator $\mathcal{U}$ generates exactly all
reachable successor states that preserve observational consistency with the
extended observation sequence $H(\tau(0,k+1))$.
Hence,
\begin{align}
& C(\pi(k+1))
=
\{\pi'(k+1) \mid
\exists \tau'(0,k+1)
\text{ such that } \nonumber \\&~~~
~H(\tau'(0,k+1))=H(\tau(0,k+1))\}.
\end{align}

Thus the induction holds.
\end{proof}

Intuitively, Theorem \ref{theorem: consistant set} states that after observing the prefix $H(\tau(0,k))$, the intruder cannot distinguish the true state $\pi(k)$ from any state in $C(\pi(k))$.  
Thus, $C(\pi(k))$ represents all states observationally consistent with the observation sequence up to the step $k$.
Next, we derive a corollary of Theorem \ref{theorem: verify CSEO} that associates each state in the consistent-state set with a corresponding path in the transition system $T$, providing the theoretical foundation for the subsequent CSEO verification approach.
\begin{corollary}\label{corollary: Ci_complete}
Given a path $\tau= \pi(0)\pi(1) \dots\pi(k) \cdots \in Path^{*\omega}(T)$ for a transition system $T$ as in Definition \ref{def: TS}, and its corresponding OTS $T^{o}$ is constructed following the Definition \ref{def: OTS}, where the update operator is defined in \eqref{eq: update operator}.
The consistent-state set $C(\pi(k))$ satisfies $ \pi_{c} \in C(\pi(k)) $
if and only if $\exists \tau^{\prime}=\pi^{\prime}(0)\pi^{\prime}(1) \dots\pi^{\prime}(k) \cdots \in   Path^{* \omega}(T)\text{ such that }
H(\tau^{\prime}(0,k)) = H(\tau(0,k))$ and $\pi_{c} =\pi^{\prime}(k)$.
\end{corollary}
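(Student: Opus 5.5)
The plan is to read Corollary~\ref{corollary: Ci_complete} off Theorem~\ref{theorem: consistant set}, with one extra step: a finite witness path $\tau'(0,k)$ must be matched with a path $\tau'\in Path^{*\omega}(T)$ whose length-$k$ prefix is that witness. First note that, since $\tau$ has a corresponding OTS path with $C(\pi(j))\neq\emptyset$ for all $j$, Theorem~\ref{theorem: consistant set} applies. It gives, for each step $k$,
\[
C(\pi(k))=\{\pi'(k)\mid \exists \tau'(0,k)\in Path^{*}(T)\text{ such that } H(\tau'(0,k))=H(\tau(0,k))\}.
\]

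For the ($\Rightarrow$) direction, let $\pi_c\in C(\pi(k))$. By the displayed identity there is a finite path $\tau'(0,k)=\pi'(0)\cdots\pi'(k)$ with $\pi'(0)\in\Pi_0$, $\pi'(k)=\pi_c$, and $H(\tau'(0,k))=H(\tau(0,k))$. Since $Path^{*\omega}(T)$ contains finite paths (the superscript $*\omega$ is the union of finite and infinite paths), we take $\tau':=\tau'(0,k)$ itself as the required element of $Path^{*\omega}(T)$. Its prefix up to step $k$ is the path itself, so both conditions hold.

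For the ($\Leftarrow$) direction, let $\tau'\in Path^{*\omega}(T)$ with $H(\tau'(0,k))=H(\tau(0,k))$ and $\pi_c=\pi'(k)$. The prefix $\tau'(0,k)$ is a finite path of $T$: it starts in $\Pi_0$, and consecutive states are related by $\to$ because they are consecutive in $\tau'$. It therefore witnesses membership in the right-hand side of the identity, and Theorem~\ref{theorem: consistant set} gives $\pi_c\in C(\pi(k))$.

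The main obstacle is the bookkeeping between finite and infinite paths, and between $T$ and $T^o$. We must check that $C(\pi(k))$ is well defined along $\tau$, i.e.\ that $\tau$ lifts to an OTS path. This holds because $\pi(j)\in C(\pi(j))$ at every step, by induction on the update operator $\mathcal{U}$, so condition (3) of $\to^o$, $C\neq\emptyset$, is never violated. We must also make sure the witness lives in $Path^{*\omega}(T)$. If one insists on infinite witnesses, the ($\Rightarrow$) direction would additionally need $\pi_c$ to have an infinite continuation. So the argument will rely on finite paths being admissible elements of $Path^{*\omega}(T)$, which avoids any extendability assumption.
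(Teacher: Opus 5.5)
Your proposal is correct and takes the same route as the paper: both directions are read off the characterization of $C(\pi(k))$ in Theorem~\ref{theorem: consistant set}. Your extra bookkeeping makes explicit details that the paper's proof leaves implicit, namely lifting $\tau$ to an OTS path via $\pi(j)\in C(\pi(j))$ (so condition (3) of $\to^{o}$ is never violated) and taking the finite witness $\tau'(0,k)$ itself as the required element of $Path^{*\omega}(T)$, which avoids any extendability assumption.
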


\begin{proof}
The Corollary directly follows from the construction of the consistent-state
set and the existential update operator as in \eqref{eq: update operator}.
By Theorem \ref{theorem: consistant set},
$C(\pi(k)) =\{\pi'(k) \mid
\exists \tau'(0,k)
\text{ such that }
H(\tau'(0,k))=H(\tau(0,k))\} $.
Therefore, for any $\pi_c \in C(\pi(k))$, there exists a path $\tau'=\pi'(0)\pi'(1)\dots\pi'(k)\dots \in Path^{*\omega}(T)$
such that $H(\tau'(0,k))=H(\tau(0,k))$
and $\pi_c=\pi'(k)$.

Conversely, suppose that 
$\exists \tau'=\pi'(0)\pi'(1)\dots\pi'(k)\dots \in Path^{*\omega}(T)$ 
such that 
$H(\tau'(0,k))=H(\tau(0,k))$ 
and $\pi_c=\pi'(k)$.
This implies that $\pi_c$ is a state reachable at step $k$ through a path whose observation prefix is identical to that of $\tau(0,k)$.
According to Theorem \ref{theorem: consistant set}, $\pi_{c} \in C(\pi(k))$ holds.

Hence, the corollary holds.
\end{proof}

Intuitively, Corollary \ref{corollary: Ci_complete} states that each state in the consistent-state set $C(\pi(k))$ corresponds to at least one path in the original transition system $T$ that produces the same observation prefix $H(\tau(0,k))$. 
For $\pi^{o}_{} = (\pi, C(\pi), m)$, we denote $\pi^{o} \in \Pi_{s}$ if $\pi \in \Pi_{s}$.
Next, we propose methods to verify the CSEO defined in Definition~\ref{def: CSEO} based on the OTS stated in Definition \ref{def: OTS}.
\begin{theorem} \label{theorem: verify CSEO}
    Consider a transition system $T$ described in Definition \ref{def: TS}, a passive intruder $H$ defined in Definition~\ref{def: intruder}, and a set of secret state $\Pi_{s} \subset\Pi$. 
    The corresponding OTS associated with $T$ is constructed following the Definition \ref{def: OTS}, where the update operator is defined in \eqref{eq: update operator}. 
    The system $T$ is CSEO iff $ \exists\tau^o \in Path^{*\omega}(T^o)$ such that
    $\forall k \in \mathbb{N}, \text{if}~ \pi^{o}(k) \in \Pi_{s}, \text{then~}
    m(k) = \{1\}$ holds.
\end{theorem}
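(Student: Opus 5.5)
We prove both directions by moving between paths of $T$ and paths of the OTS $T^{o}$. The link between them is Corollary~\ref{corollary: Ci_complete}: along an OTS path, $C(\pi(k))$ is exactly the set of states $\pi'(k)$ reached by some path $\tau'$ of $T$ with $H(\tau'(0,k))=H(\tau(0,k))$.

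First we record the lifting and projection correspondence. Take a path $\tau=\pi(0)\pi(1)\cdots$ of $T$ with $\pi(0)\in\Pi_0$. Set $C(\pi(0))=\{\pi\in\Pi_0\mid H(\pi)=H(\pi(0))\}$ and $C(\pi(k+1))=\mathcal{U}(C(\pi(k)),\pi(k+1))$, and attach the markers determined by each pair $(\pi(k),C(\pi(k)))$. This gives a sequence of observer states. Each $C(\pi(k))$ is nonempty, because by Corollary~\ref{corollary: Ci_complete} (with $\tau'=\tau$) it contains $\pi(k)$. Hence condition (3) of $\to^{o}$ holds, and conditions (1)--(2) hold by construction, so we obtain a path $\tau^o\in Path^{*\omega}(T^o)$. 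Conversely, the first components of any path $\tau^o$ of $T^o$ form a path $\tau$ of $T$, by condition (1) of $\to^{o}$ and the definition of $\Pi^o_0$. Both directions therefore reduce to a pointwise equivalence: for each $k$ with $\pi(k)\in\Pi_s$, we have $m(k)=\{1\}$ if and only if some $\tau'$ exists with $\pi'(k)\notin\Pi_s$ and $H(\tau'(0,k))=H(\tau(0,k))$.

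Next we prove the pointwise equivalence, assuming $\pi(k)\in\Pi_s$. ($\Leftarrow$) Suppose such a $\tau'$ exists. By Corollary~\ref{corollary: Ci_complete}, $\pi'(k)\in C(\pi(k))$, and $\pi'(k)\in\Pi_{ns}$. Since $\pi(k)\in\Pi_s$, we get $\pi'(k)\neq\pi(k)$, so $\pi'(k)\in C(\pi(k))\setminus\{\pi(k)\}$, and case (1) of the marker definition gives $m(k)=\{1\}$. ($\Rightarrow$) If $m(k)=\{1\}$, some $\pi_{ns}\in\Pi_{ns}\cap C(\pi(k))$ exists. Corollary~\ref{corollary: Ci_complete} then supplies $\tau'$ with $\pi'(k)=\pi_{ns}\notin\Pi_s$ and matching observation prefix.

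Finally we assemble the two directions. If $T$ is CSEO with witness $\tau$, we lift $\tau$ to $\tau^o$. At each $k$ with $\pi^o(k)\in\Pi_s$, the CSEO witness $\tau'$ yields $m(k)=\{1\}$. If instead such a $\tau^o$ exists, we project it to $\tau$. At each secret step, $m(k)=\{1\}$ yields the required $\tau'$, so $\tau$ witnesses CSEO.

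The main obstacle is making the OTS well defined and showing the correspondence is faithful. Markers must be single-valued for each pair (cases (1) and (2) are disjoint), and the initial set $\Pi_0^o$ must be read so that only the intended consistent set is used at $k=0$. One must also check that Theorem~\ref{theorem: consistant set} applies to lifted paths, i.e.\ that $C$ computed along the lift coincides with the set characterized there. I would treat this carefully first, restricting $\Pi^o$ to states reachable under the stated update rule.
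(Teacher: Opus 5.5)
Your proposal is correct and follows essentially the same route as the paper: you lift the CSEO witness path to an OTS path and use the consistent-set characterization (Theorem~\ref{theorem: consistant set} and Corollary~\ref{corollary: Ci_complete}) to identify marker $\{1\}$ at secret steps with the existence of an observation-equivalent path ending in a non-secret state, and for the converse you project an OTS path back to $T$. If anything, your version is tighter than the paper's: you build a single lifted path that works for all secret steps at once, you check condition (3) and $\pi'(k)\neq\pi(k)$ explicitly, and you state the intended reading of $\Pi^{o}_{0}$.
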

\begin{proof}
    $(\Rightarrow)$ Suppose $T$ is CSEO. 
    Then, according to Definition \ref{def: CSEO}, $\exists\tau \in Path^{*\omega}(T)$ such that 
    for any $k \in \mathbb{N}$, if $\pi(k) \in \Pi_s$, we have $\exists \tau' \in Path^{*\omega}(T)$ satisfying  $\pi'(k) \notin \Pi_s$ and $H(\tau(0,k)) = H(\tau'(0,k))$.
   By the OTS construction in Definition \ref{def: OTS} and Theorem \ref{theorem: consistant set}, the set $C(\pi(k)) = \mathcal{U}
\big(
C(\pi(k-1)),\pi(k)
\big)$ in $\pi^{o}(k)$ characterizes the set consisting of all states in the $k$-th step observationally consistent with the observation sequence $H(\tau(0,k))$.
If the path $\tau:= \pi(0)\pi(1) \cdots \pi(k) \cdots  \in Path ^{*\omega}(T)$ visits secret states and
$T$ is CSEO, then for any step $k_{s}$ such that $\pi(k_{s}) \in \Pi_s$, there exist another path $\tau^{\prime}:=\pi^{\prime}(0)\pi^{\prime}(1) \dots\pi^{\prime}(k_{s}) \cdots \in   Path^{* \omega}(T) $ such that $\pi'(k_{s}) \notin \Pi_s$ and $H(\tau(0,k_{s})) = H(\tau'(0,k_{s}))$. 
Therefore, in the corresponding OTS $T^o$, there exists a path $\tau^o=\pi^{o}(0)\pi^{o}(1)\cdots\pi^{o}(k_{s}) \cdots  \in Path^{* \omega}(T^o)$ such that $\pi^{o}(k_{s})=(\pi(k_{s}), C(\pi(k_{s})), m(k_{s}))$ with $\pi^{\prime}(k_{s}) \in C(\pi(k_{s}))$ and $m(k_{s})=\{1\}$ according to the OTS construction in Definition \ref{def: OTS}.

($\Leftarrow$)
Conversely, suppose there exists a path 
$\tau^o \in Path^{*\omega}(T^o)$ satisfying that 
$ \forall k \in \mathbb{N}$, if $\pi^{o}(k) \in \Pi_{s} $, $m(k) = \{1\}$ holds.
Let $\tau = \pi(0)\pi(1) \cdots \pi(k) \cdots  \in Path ^{*\omega}(T)$ be the corresponding path such that for each step $k$, $\pi(k) = Proj_{1} \pi^{o}(k)$.
Suppose there exists a step $k_{s}$ in $\tau^{o}$ such that $\pi^{o}(k_{s}) \in \Pi_{s} $.
Since $m(k_{s})=\{1\}$, the set $C(\pi(k_{s}))$
contains at least one non-secret state $\pi_{ns}\in C(\pi(k_{s}))$.
By Corollary \ref{corollary: Ci_complete}, there exists a path $\tau^{\prime}:=\pi^{\prime}(0)\pi^{\prime}(1) \dots\pi^{\prime}(k_{s}) \cdots \in   Path^{* \omega}(T) $ such that
$H(\tau(0,k_{s})) = H(\tau'(0,k_{s}))$
and $\pi'(k_{s})\notin\Pi_s$, which satisfies Definition \ref{def: CSEO}.
Therefore, the system $T$ is CSEO.
\end{proof}

Theorem \ref{theorem: verify CSEO} provides a system verification approach for CSEO in terms of the existence of a path in the observation transition system $T^{o}$. However, directly checking the condition in Theorem \ref{theorem: verify CSEO} by enumerating paths of $T^{o}$ is generally impractical, especially when the system has a large or infinite number of paths. 
Therefore, instead of examining paths individually, we transform the verification problem to a reachability check on a modified OTS.
Specifically, since states with marker component $\{2\}$ violate the requirement in Theorem \ref{theorem: verify CSEO}, we remove such states in OTS and construct a reduced observer transition system (ROTS).


\begin{definition} \label{def: sub OTS}
Let $T^{o}_{\neg 2}= (\Pi^{o}_{\neg 2}, \Pi^{o}_{0,\neg 2}, \to^{o}_{\neg 2}, U)$ denote the reduced observer transition system (ROTS) of $T^{o}$ obtained by removing all states $\pi^{o}$ such that $m=\{2\}$ and the transitions incident to them, where
\begin{itemize}
    \item $\Pi^{o}_{\neg 2} = \{\pi^{o} \in \Pi^{o} \mid m \neq \{2\}\}$ contains all states of the original system $T^{o}$ except those whose marker component is $\{2\}$.
    \item $\Pi^{o}_{0,\neg 2} = \Pi^{o}_{0} \cap \Pi^{o}_{\neg 2}$ defines the initial states of the reduced system and consists of the initial states of $T^{o}$ that remain after removing states with marker $\{2\}$.
    \item $\to^{o}_{\neg 2} := \{(\pi^{o},u,\pi^{o+}) \in \to^{o} \mid
\pi^{o}\in \Pi^{o}_{\neg 2},\ \pi^{o+}\in \Pi^{o}_{\neg 2}, u\in U\}$, which indicates that the transition relation $\Pi^{o}_{0,\neg 2} = \Pi^{o}_{0} \cap \Pi^{o}_{\neg 2}$ preserves only those transitions in $\to^{o}$ whose source and target states both belong to $\Pi^{o}_{\neg 2}$.
\end{itemize}
\end{definition}
Based on the ROTS derived from Definition \ref{def: sub OTS}, we present the following corollary that verifies the CSEO property in terms of reachability analysis in ROTS.
\begin{corollary} \label{corollary: ROTS}
Consider a transition system $T$ with a set of secret state $\Pi_{s} \subset\Pi$, a passive intruder $H$ defined in Definition~\ref{def: intruder}, and the corresponding OTS $T^{o}$.
Let $T^{o}_{\neg 2}$ denote the ROTS obtained by removing
all observer states with marker $m=\{2\}$ as defined in Definition \ref{def: sub OTS}.
The system $T$ is CSEO if and only if 
$T^{o}_{\neg 2}$ contains a state $\pi^{o} \in \Pi_s$ with marker $m=\{1\}$ that is reachable from some initial state in $\Pi^{o}_{0,\neg 2}$.
\end{corollary}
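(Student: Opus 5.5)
The plan is to reduce the corollary to Theorem~\ref{theorem: verify CSEO} using one structural fact about the markers in Definition~\ref{def: OTS}. Every observer state $\pi^{o}=(\pi,C(\pi),m)$ with $\pi\in\Pi_s$ has marker $\{1\}$ or $\{2\}$, never $\emptyset$. I will check this first. By Theorem~\ref{theorem: consistant set} and the construction, $\pi\in C(\pi)$. So if $C(\pi)\not\subseteq\Pi_s$, some non-secret state lies in $C(\pi)$, and it differs from $\pi$ because $\pi$ is secret; this gives $m=\{1\}$. Otherwise $C(\pi)\subseteq\Pi_s$, which gives $m=\{2\}$. Two consequences follow: (a) in the ROTS $T^{o}_{\neg 2}$, every observer state in $\Pi_s$ carries marker $\{1\}$; (b) a path of $T^{o}$ along which every secret observer state has marker $\{1\}$ never visits a marker-$\{2\}$ state. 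I also use that $\to^{o}_{\neg 2}\subseteq\to^{o}$ and $\Pi^{o}_{0,\neg 2}\subseteq\Pi^{o}_0$, so finite paths of $T^{o}_{\neg 2}$ are finite paths of $T^{o}$. Conversely, any path of $T^{o}$ avoiding marker-$\{2\}$ states is a path of $T^{o}_{\neg 2}$.

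($\Leftarrow$) Suppose some $\pi^{o}\in\Pi_s$ with $m=\{1\}$ is reachable in $T^{o}_{\neg 2}$. Take a finite witness path $\tau^{o}=\pi^{o}(0)\cdots\pi^{o}(n)$ with $\pi^{o}(0)\in\Pi^{o}_{0,\neg 2}$ and $\pi^{o}(n)=\pi^{o}$. It is an element of $Path^{*}(T^{o})\subseteq Path^{*\omega}(T^{o})$. By (a), every secret observer state on it has marker $\{1\}$. The ($\Leftarrow$) direction of Theorem~\ref{theorem: verify CSEO} then yields CSEO. The witness $\tau=Proj_1(\tau^o)$ genuinely visits a secret state at step $n$, and Corollary~\ref{corollary: Ci_complete} supplies, for that step, the indistinguishable non-secret path $\tau'$.

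($\Rightarrow$) Suppose $T$ is CSEO. By Theorem~\ref{theorem: verify CSEO} there is $\tau^{o}\in Path^{*\omega}(T^{o})$ whose secret observer states all carry marker $\{1\}$. By (b), $\tau^{o}$ avoids marker-$\{2\}$ states, so it is a path of $T^{o}_{\neg 2}$ starting in $\Pi^{o}_{0,\neg 2}$. Taking the first $k_s$ with $\pi^{o}(k_s)\in\Pi_s$, the prefix $\tau^{o}(0,k_s)$ witnesses reachability of a secret observer state with marker $\{1\}$.

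The main obstacle is vacuity in this direction. Definition~\ref{def: CSEO} is also satisfied by a path that never enters $\Pi_s$, and such a path gives no secret state to reach. I would resolve this by reading CSEO non-vacuously: the witness path $\tau$ must visit $\Pi_s$ at least once. This is consistent with Definition~\ref{def: EO}, which requires $\tau\in\varphi_s$, i.e.\ a prefix ending in $\Pi_s$ as in \eqref{eq: phi_s_i}, and with the remark after Definition~\ref{def: CSEO} identifying CSEO with EO for $\mathcal{D}_{CSO}(T)$ and $\varphi_s$. I would make this reading explicit at the start of the proof, and then the argument above goes through unchanged.
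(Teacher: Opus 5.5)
Your proposal is correct and follows the paper's proof: both directions reduce to Theorem~\ref{theorem: verify CSEO}, using the fact that paths of $T^{o}_{\neg 2}$ starting in $\Pi^{o}_{0,\neg 2}$ are exactly the paths of $T^{o}$ that avoid marker-$\{2\}$ states. You also make explicit two points the paper uses silently: a secret observer state never carries marker $\emptyset$ (so secret states that survive in the ROTS carry $\{1\}$), and the ($\Rightarrow$) direction needs the CSEO witness to actually visit $\Pi_s$, a non-vacuous reading that the paper's proof and its Case~II verdict tacitly assume.
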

\begin{proof}
($\Rightarrow$)
Suppose that $T$ is CSEO. 
By Theorem \ref{theorem: verify CSEO}, there exists a path $\tau^o \in Path^{*\omega}(T^o)$
and an index $k_{s} \in \mathbb{N}$ such that $\pi^o(k_{s} ) \in \Pi_s$ and
$m(k_{s} )=\{1\}$. Moreover, by the CSEO condition, whenever
$\pi^o(k_{s} ) \in \Pi_s$ along the path, it holds that $m(k_{s} )=\{1\}$.
In particular, the path leading to $\pi^o(k_{s})$ never visits a state
with marker $m=\{2\}$.
Therefore, this path remains entirely within the ROTS $T^o_{\neg 2}$, which implies that $\pi^o(k_{s} )$ is reachable in
$T^o_{\neg 2}$. Hence, there exists a reachable state
$\pi^o \in \Pi_s$ with $m=\{1\}$ in $T^o_{\neg 2}$.

($\Leftarrow$)
Conversely, suppose that there exists a reachable state
$\pi^o \in \Pi_s$ with $m=\{1\}$ in $T^o_{\neg 2}$.
By definition of $T^o_{\neg 2}$, all states with marker $m=\{2\}$ have
been removed. Hence, any path in $T^o_{\neg 2}$ corresponds to a path
in $T^o$ that never visits a state with marker $\{2\}$.
Since $\pi^o$ is reachable in $T^o_{\neg 2}$, there exists a path
$\tau^o \in Path^{*\omega}(T^o)$ that reaches $\pi^o$ without passing
through any state in which $m=\{2\}$. Consequently, whenever a state
$\pi^o(k) \in \Pi_s$ occurs along this path, its marker must satisfy
$m(k)=\{1\}$.
Therefore, the condition of Theorem \ref{theorem: verify CSEO} holds, and $T$ is CSEO.
\end{proof}
Intuitively, Corollary \ref{corollary: ROTS} indicates that the verification of CSEO can be addressed by examining the reachable states in the ROTS. 
If there exists a reachable secret state in ROTS with marker $m=\{1\}$, it indicates that the corresponding secret path is observationally indistinguishable from some non-secret path to the intruder up to the current step. 
Therefore, the existence of such a state guarantees that the system satisfies CSEO.
\begin{remark}
The verification of CSEO reduces to a reachability problem in the reduced observation transition system $T^o_{\neg 2}$. 
Such reachability can be checked using standard graph search algorithms, such as breadth-first search or depth-first search, starting from the initial state set $\Pi^o_{0,\neg2}$.
\end{remark}
\vspace{-4pt}
\section{Case Study} \label{sec: case study}
In this section, we consider the same workspace shown in Fig.~\ref{fig: workspace example} and the same intruder model described in Example~\ref{eg: motivation}. 
We assume that the intruder has complete knowledge of the system model of an agent. 
Two cases with different initial and secret state sets are considered. 
Since these sets differ in each case, the corresponding transition system $T$ defined in Definition~\ref{def: TS} are also different.

\emph{Case I:} Suppose that the set of initial states of the agent is $\Pi_{0} =\{q_{1}\}$ and the set of secret states is $\Pi_{s} =\{q_{2}\}$. 
The corresponding OTS is shown in Fig.~\ref{fig: case I}. 
The reduced OTS (ROTS) is obtained by removing the states and transitions marked in red. 
Since there exists a path in the ROTS such that the state $(q_{2},\{q_{2},q_{4}\}, \{1\})$ is reachable, it follows from Theorem~\ref{theorem: verify CSEO} and Corollary~\ref{corollary: ROTS} that the transition system $T$ in this case is CSEO.

\emph{Case II:} Suppose that the set of initial states for the agent is $\Pi_{0} =\{q_{2}\}$ and the set of secret states is $\Pi_{s} =\{q_{4}\}$. 
The corresponding OTS is shown in Fig.~\ref{fig: case II}. 
Similarly, the ROTS is obtained by removing the states and transitions marked in red. 
In this case, any path in the OTS that visits the state $(q_{4},\{q_{2},q_{4}\}, \{1\})$ must also visit the state $(q_{4},\{q_{4}\}, \{2\})$. 
Moreover, the state $(q_{4},\{q_{2},q_{4}\}, \{1\})$ is not reachable in the ROTS. 
Therefore, according to Theorem~\ref{theorem: verify CSEO} and Corollary~\ref{corollary: ROTS}, the transition system $T$ in this case is not CSEO.

\begin{figure}[htbp]
	\centering
	\includegraphics[width=0.48\textwidth]{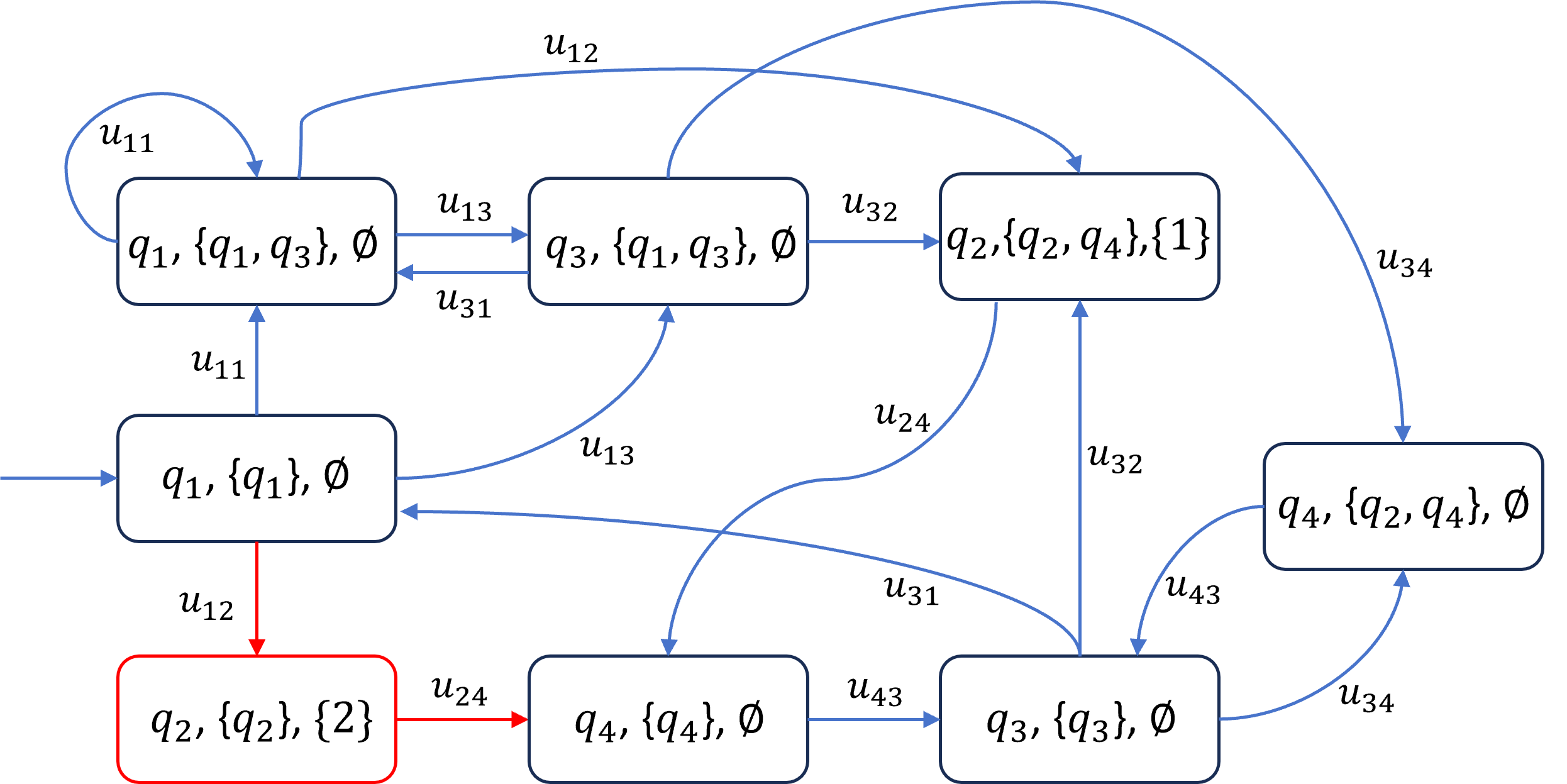}
    \vspace{-4pt}
	\caption{The OTS in Case I. The states and transitions marked in red are removed to obtain the corresponding ROTS.}
    \vspace{-4pt}
	\label{fig: case I}
\end{figure}

\begin{figure}[htbp]
	\centering
	\includegraphics[width=0.46\textwidth]{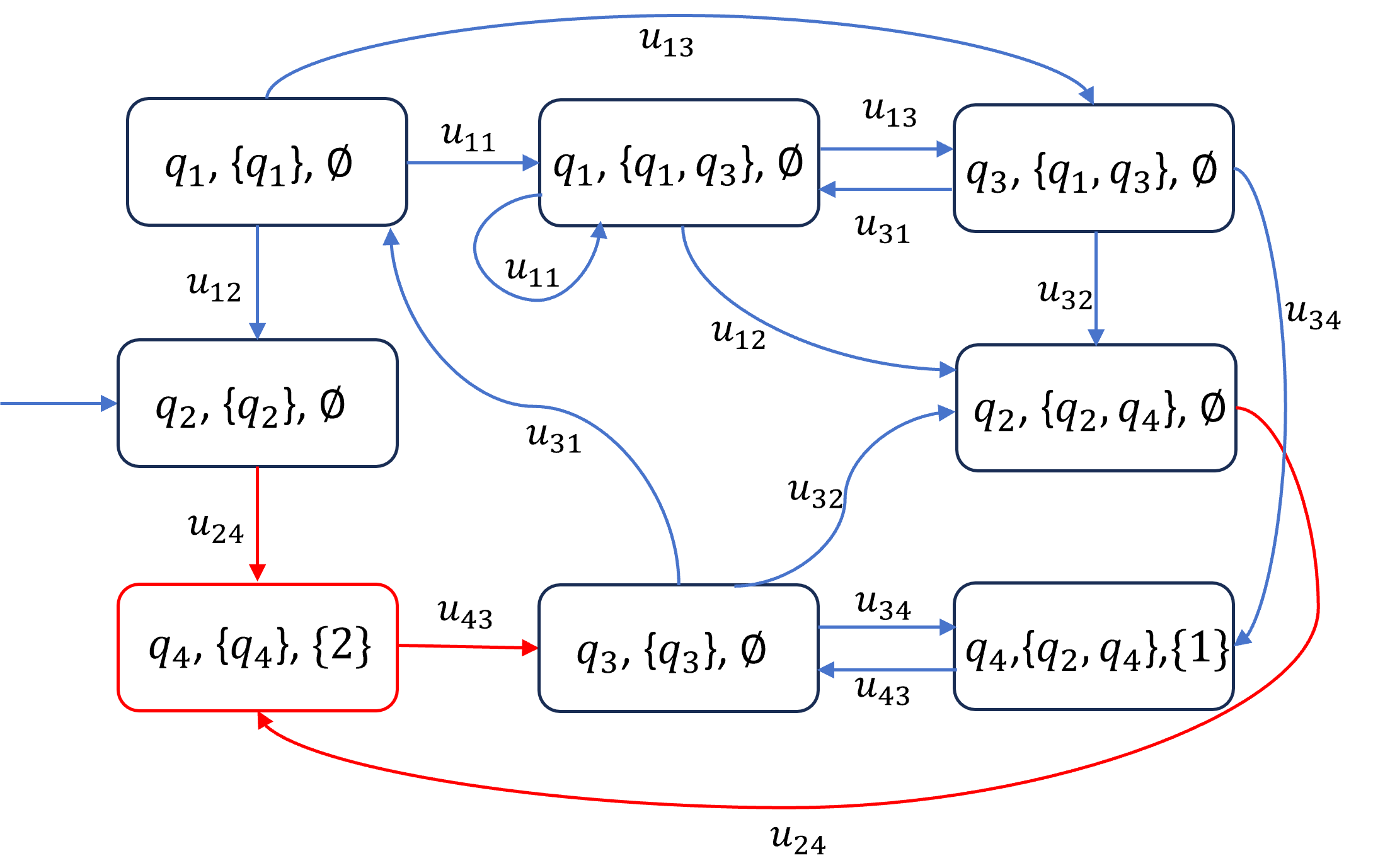}
    \vspace{-4pt}
	\caption{The OTS in Case II. The states and transitions marked in red are removed to obtain the corresponding ROTS.}
    \vspace{-4pt}
	\label{fig: case II}
\end{figure}
\vspace{-4pt}
\section{Conclusion} \label{Conclusion}

In this paper, we study state-observation-based opacity in transition systems observed by a passive intruder. Motivated by the limitations of this notion, we introduce a relaxed system property called \emph{existential opacity}. We further investigate the relationship between existential opacity and state-observation-based opacity, as well as the connection between existential opacity and the feasibility of opacity-preserving problems. In addition, we define a specific type of existential opacity, termed \emph{CSEO}, and propose a method to verify whether a system satisfies this property. 
For future work, we plan to explore additional types of existential opacity and develop corresponding verification approaches, as well as study their relation to specific opacity-preserving problems.

\bibliographystyle{ieeetr}
\bibliography{reference}

\end{document}